\documentclass[aps,prl,reprint,superscriptaddress]{revtex4-2}
\usepackage[T1]{fontenc}
\usepackage{amsmath}
\usepackage{amsthm}
\usepackage{amssymb}
\usepackage{graphicx}
\usepackage{qcircuit}
\usepackage{braket}
\usepackage{comment}
\usepackage{bbm}
\usepackage{float}
\usepackage{bm}
\usepackage{dcolumn}
\usepackage{physics}
\usepackage{subcaption}
\usepackage{svg}
\usepackage{caption}
\usepackage{hyperref}
\usepackage{color}

\hypersetup{colorlinks,linkcolor={blue},citecolor={magenta},urlcolor={blue}}

\newcommand{\A}{\h{A}}
\newcommand{\beq}{\begin{equation}}
\newcommand{\eeq}{\end{equation}}
\definecolor{Pr}{rgb}{0.4,0.3,0.9}

\newcommand{\spliteq}[1]{\begin{equation}
\begin{split}
#1
\end{split}
\end{equation}
}

\def\ket#1{\mathop{|#1\rangle}}
\def\bra#1{\mathop{\langle #1|}}

\def\la{\lambda}

\definecolor{JM}{RGB}{4,116,149}

\def\h{\hat}
\def\U{\hat{U}}
\def\p{\hat{p}}
\def\F{\hat{F}}
\def\M{\hat{M}}
\def\x{\hat{x}}
\def\X{\hat{X}}
\def\H{\hat{H}}
\def\T{^{T}}
\theoremstyle{remark}
\newtheorem*{claim}{Claim}
\theoremstyle{remark}
\newtheorem*{note}{Note}

\usepackage{comment}
\usepackage{graphicx}
\usepackage{amsmath}
\usepackage{amssymb}
\usepackage{changepage}
\usepackage{mathtools}
\usepackage[thinc]{esdiff}
\usepackage{multirow}
\usepackage{braket}
\PassOptionsToPackage{margin=1in}{geometry}
\usepackage{lipsum}
\usepackage{setspace}

\usepackage{listings}
\usepackage{xcolor}
\usepackage[ruled,lined]{algorithm2e}
\usepackage{algpseudocode}

\definecolor{codegreen}{rgb}{0,0.6,0}
\definecolor{codegray}{rgb}{0.5,0.5,0.5}
\definecolor{codepurple}{rgb}{0.58,0,0.82}
\definecolor{backcolour}{rgb}{0.95,0.95,0.92}

\lstdefinestyle{mystyle}{
    backgroundcolor=\color{backcolour},   
    commentstyle=\color{codegreen},
    keywordstyle=\color{magenta},
    numberstyle=\tiny\color{codegray},
    stringstyle=\color{codepurple},
    basicstyle=\ttfamily\footnotesize,
    breakatwhitespace=false,         
    breaklines=true,                 
    captionpos=b,                    
    keepspaces=true,                 
    numbers=left,                    
    numbersep=5pt,                  
    showspaces=false,                
    showstringspaces=false,
    showtabs=false,                  
    tabsize=2
}
\lstset{style=mystyle}

\begin{document}
\title{Quantum Dynamical Hamiltonian Monte Carlo}

\affiliation{Department of Computer Science, Rensselaer Polytechnic Institute, Troy, NY 12180, USA}
\affiliation{Extropic Corp., San Francisco, CA 94111, USA}
\affiliation{Dirac Inc., New York, NY 10001, USA}
\affiliation{Institute for Quantum Computing, University of Waterloo, Waterloo, Ontario, N2L 3G1, Canada}
\affiliation{Department of Applied Mathematics, University of Waterloo, Waterloo, Ontario, N2L 3G1, Canada}

\author{Owen Lockwood}
\email{owen@extropic.ai}
\affiliation{Department of Computer Science, Rensselaer Polytechnic Institute, Troy, NY 12180, USA}
\affiliation{Extropic Corp., San Francisco, CA 94111, USA}

\author{Peter Weiss}
\affiliation{Dirac Inc., New York, NY 10001, USA}

\author{Filip Aronshtein}
\affiliation{Dirac Inc., New York, NY 10001, USA}

\author{Guillaume Verdon}
\email{gv@extropic.ai}
\affiliation{Institute for Quantum Computing, University of Waterloo, Waterloo, Ontario, N2L 3G1, Canada}
\affiliation{Department of Applied Mathematics, University of Waterloo, Waterloo, Ontario, N2L 3G1, Canada}
\affiliation{Extropic Corp., San Francisco, CA 94111, USA}

\begin{abstract}

One of the open challenges in quantum computing is to find meaningful and practical methods to leverage quantum computation to accelerate classical machine learning workflows. 
A ubiquitous problem in machine learning workflows is sampling from probability distributions that we only have access to via their log probability. To this end, we extend the well-known Hamiltonian Monte Carlo (HMC) method for Markov Chain Monte Carlo (MCMC) sampling to leverage quantum computation in a hybrid manner as a proposal function. Our new algorithm, Quantum Dynamical Hamiltonian Monte Carlo (QD-HMC), replaces the classical symplectic integration proposal step with simulations of quantum-coherent continuous-space dynamics on digital or analogue quantum computers. We show that QD-HMC maintains key characteristics of HMC, such as maintaining the detailed balanced condition with momentum inversion, while also having the potential for polynomial speedups over its classical counterpart in certain scenarios.  As sampling is a core subroutine in many forms of probabilistic inference, and MCMC in continuously-parameterized spaces covers a large-class of potential applications, this work widens the areas of applicability of quantum devices. 

\end{abstract}
\maketitle

\section{Introduction}

Fueled by the success of machine learning (ML) \cite{lecun2015deep} and recent developments in quantum computing hardware \cite{kim2023evidence, arute2019quantum, zhong2020quantum, chow2021ibm, wu2021strong, madsen2022quantum}, substantial interest has developed at the intersection of these fields \cite{biamonte2017quantum, preskill2018quantum, bharti2021noisy}. Sampling from difficult distributions is key to many classical machine learning workflows. Sampling routines are often leveraged as a critical component of workloads such as Bayesian inference \cite{von2011bayesian}, optimization \cite{bertsimas1993simulated}, machine learning \cite{porteous2008fast}, statistical inference/modelling \cite{van2021bayesian}, and Energy Based Models (EBMs) \cite{lecun2006tutorial, du2019implicit} to name a few. Research at the intersection of classical and near term quantum machine learning has been heavily focused on parameterized quantum circuits \cite{benedetti2019parameterized} for problems such as quantum simulation \cite{yuan2019theory,endo2020variational,sbahi2022provably}, reinforcement learning \cite{chen2020variational,lockwood2020reinforcement,lockwood2021playing}, and mathematical applications \cite{anschuetz2019variational,kubo2021variational,lubasch2020variational}; however, we explore a different direction to accelerate classical machine learning workflows. Specifically, we investigate the potential for quantum computers to accelerate Hamiltonian Monte Carlo (HMC) \cite{duane1987hybrid} proposals. HMC is a continuous parameter space Markov Chain Monte Carlo (MCMC) method that approximates hamiltonian dynamics to provide improved proposals \cite{betancourt2017conceptual, beskos2013optimal}. MCMC methods \cite{robert2011short}, such as HMC, are some of the most established methods in classical machine learning and provide a general purpose toolkit for sampling from target probability distributions. Accelerating MCMC with quantum proposals is a research direction that has only just begun \cite{mazzola2021sampling,layden2022quantum, nakano2023qaoa, orfi2023near, mansky2023sampling, mazzola2024quantum, ferguson2024quantum}.

As novel physics-based accelerators for probabilistic sampling are on the horizon, it is important to benchmark the performance of quantum computers for this task. In this paper we take a first step towards porting a core sampling algorithm to quantum devices. Here, we propose a method for extending the HMC approach to leverage quantum computation via an algorithm we call Quantum Dynamical Hamiltonian Monte Carlo (QD-HMC). Our algorithm directly builds upon the work of \citeauthor{layden2022quantum} \cite{layden2022quantum}, in which they proposed the use of quantum simulations to generate MCMC proposals in discrete state spaces, and \citeauthor{verdon2019quantum} \cite{verdon2019quantum}, which proposed a Continuous Variable (CV) QAOA \cite{farhi2014quantum}. CV-QAOA is effectively a variational Trotterization of continuous space dynamics, similar to QD-HMC's randomized Trotterization of dynamics. Our method utilizes quantum computers to more efficiently simulate the Hamiltonian dynamics used for the proposal for HMC. We outline the theory behind this algorithm and present initial small simulations of QD-HMC. Given the classical complexity of simulating quantum dynamics, there is clear potential in using quantum hardware to generate proposals. QD-HMC leverages hybrid quantum computing to effectively create a method for quantum devices to be relevant to probabilistic and Bayesian inference at large scale. We hope that this link will fuel further explorations of scalability and performance for QD-HMC.

\section{Background}

There are many cases in physics, machine learning, and optimization, in which we do not have access to a desired target distribution, but we do have access to an energy function (or an oracle for the un-normalized negative log-likelihood) \cite{huembeli2022physics}. To sample from this target distribution we can use MCMC methods, which often requires only the ability to evaluate the log-probability (i.e.\ the energy). Mathematically, we can represent these distributions as a Boltzmann distributions where $P(x) = \frac{1}{\mathcal{Z}} e^{-E(x)/kT}$ with $\mathcal{Z}$ being the partition function $\int e^{-E(x)/kT} dx$. In many cases, such as that of modern deep neural network based EBMs, $kT$ is just set to $1$. One of the earliest and most popular MCMC sampling methods is the Metropolis-Hastings algorithm \cite{hastings}, which achieves sampling from the target distribution by repeatedly alternating between proposal and acceptance steps. In Metropolis-Hasting, a proposal $q(y | x)$ is generated using any one of many possible methods, e.g. random walk, Langevin dynamics \cite{roberts1996exponential, cheng2018convergence}, splines \cite{shao2013efficient}, or normalizing flows \cite{brofos2022adaptation}, which is then fed into the acceptance function $A(y | x) = \min \left[ 1, \frac{\pi (y) q(y|x)}{\pi (x) q(x|y)} \right ] $. In our notation, $\pi(x) \equiv P(x)$, which represents the target distribution. In many cases, the proposal function is symmetric (i.e. $q(x|y) = q(y|x)$), which eliminates the need for the so called ``Hasting's Correction'' and reduces $A(y|x)$ to $\min \left[ 1, \frac{\pi (y)}{\pi (x)} \right ] $. In the case of energy based models, this becomes $\min \left[ 1, e^{E(x) - E(y)} \right ] $.

In continuous high dimensional state spaces, HMC provides a way to leverage Hamiltonian Dynamics and the gradient information of the log probability to simulate the trajectories of particles traversing the energy landscape \cite{betancourt2017geometric}. Specifically, HMC adds an auxiliary variable momentum $p$ to create a kinetic energy term (in addition to the log probability term defined to be the potential energy). Given the function $\pi(x)$, HMC draws from a join density $\pi(x, p) = \pi(p | x) \pi(x)$. The Hamiltonian of this joint density is $H = - \log \pi(x, p) = -\log \pi(p | x) - \log \pi(x) = K(x, p) + U(x)$ (i.e. the sum of kinetic and potential energies). By picking a value for the momentum, one can simulate the Hamiltonian dynamics $\dv{x}{t} = \pdv{H}{p} = -\pdv{p} \log \pi(p|x)$ and $\dv{p}{t} = -\pdv{H}{x} = \pdv{}{x} \log \pi(x)$. Although there are a variety of methods for integrating approximations for these dynamics \cite{bou2018geometric}, one of the most common methods is a leapfrog integrator that approximates these dynamics via the following finite difference estimation

\spliteq{\label{eq:leap}
    p_{i + 1} &= p_i - \frac{\epsilon}{2} \pdv{}{x} \log \pi(x_i) \\
    x_{i + 1} &= x_i + \epsilon M^{-1} p_{i+1} \\
    p_{i + 1} &= p_{i + 1} - \frac{\epsilon}{2} \pdv{}{x} \log \pi(x_{i+1})
}

where $\epsilon$ is the step size and $M$ is the mass matrix (often a multiple of the identity matrix) \cite{neal2011mcmc}. This proposal is then fed into the acceptance function $\min \left [1, \exp \left ( -H(x_{i + 1}, p_{i + 1}) + H(x_i, p_i) \right ) \right ]$.

Since HMC proposals (ideally) conserve energy \cite{bou2018geometric}, their acceptance probability can be much higher than other MCMC methods. Naturally, if we could perfectly simulate Hamiltonian dynamics the acceptance probability would be 1 \cite{duane1987hybrid}. Since the Hamiltonian dynamics are energy-conserving, the acceptance probability would always be $e^0 = 1$. Our aim is to show how these classical dynamics (classically simulated via symplectic leapfrog integration) can be replaced using a quantum computational simulation of dynamics, and how this confers a potential quantum advantage for the task of sampling from distributions in continuous spaces.

\begin{figure*}
    \centering
    \includegraphics[width=0.95\linewidth]{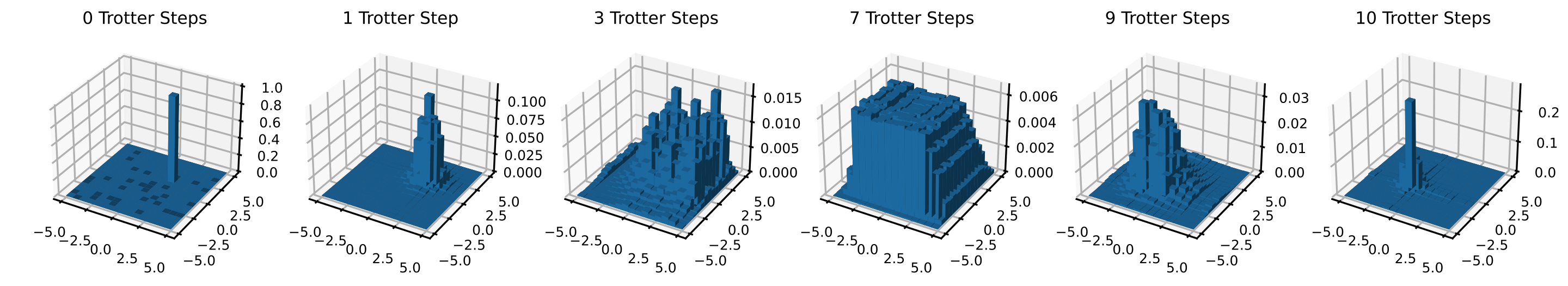}
    \caption{A QD-HMC update: 10 Trotter Steps showing the evolution of the Wavefunction probability for a 2D Gaussian $\mu=(0,0)$}
    \label{fig:wf}
\end{figure*}

\section{Quantum Dynamical Hamiltonian Monte Carlo}

Quantum Dynamical Hamiltonian Monte Carlo (QD-HMC) consists of leveraging digitally-simulated or analogue continuous quantum dynamics in order to suggest proposals for a Metropolis-Hastings acceptance step in continuous spaces. Let us first outline the steps of the QD-HMC algorithm before demonstrating its theoretical and empirical justifications. This can also be seen in Algorithm \ref{alg:app_algo}. We present the algorithm for digitally simulated quantum dynamics here.

\textbf{\textit{QD-HMC Algorithm---}}
First, prepare bitstring $\ket{x}$. With $f(\x)$ as the target distribution, define $\H_{\eta\la} = \eta\frac{\p^{2}}{2} + \la f(\x)$ with $\p$ being the momentum and chosen hyperparameters $\eta, \la$. Via random Trotterization, apply $\U_{\eta\la} \equiv e^{-i \H_{\eta\la} t}$. Sample bitstring measurement $y \sim \lvert\bra{y}\U_{\eta\la}\ket{x}\rvert^{2}$. Accept with probability $A(y|x)=\min\{1,e^{[-f(x)+f(y)]}\}$.

To understand how QD-HMC relates to traditional HMC, we need to look at the ansatz and how each layer in the simulated evolution maps the position and momenta in the Heisenberg picture. Our algorithm can be seen as a merging of Quantum Enhanced MCMC \cite{layden2022quantum} with HMC using the continuous evolution techniques of \citeauthor{verdon2019quantum} \cite{verdon2019quantum}. The evolution of these Trotter updates are very similar to those of a CV-QAOA, a connection readily seen in the following section.

Suppose the log probability is a function of $N$ variables; $f(\bm{x}): \mathbb{R}^N \mapsto \mathbb{R}$. As a quantum operator, this function becomes part of the Hamiltonian $f(\bm{\hat{x}})$, where $\bm{\hat{x}}$ is the position operator corresponding to a quadrature of a simulated $N$-dimensional quantum system. In photonic or analogue systems, such quadrature operators could be represented directly, while in digital quantum systems, this quadrature can be represented in the one-hot or binary representation. In this work we follow the convention of \cite{somma2015quantum} and \cite{verdon2018universal}. That is to say, we represent a discretized position operator $\hat{x}_d$ over $d$ qubits via:
\begin{equation}\label{eq:xd}
	\hat{x}_d = \sqrt{\tfrac{2\pi}{N}}\big( -\hat{J} + \left(\tfrac{N}{2}-1\right) I \big)
\end{equation}
where $N=2^d$ and $\hat{J}_d\equiv\sum_{j=0}^{d-1} 2^{(d-1)-j} \ket{1}\bra{1}^{(j)}$ with $\ket{1}\bra{1}^{(j)}$ acting on the $j^\text{th}$ qubit only. A computational basis state $\ket{k}$ on $d$ qubits is an eigenvector of $\hat{x}_d$ with the eigenvalue $x_k = \sqrt{2\pi/N}\left(k -N/2 \right)$. The states $\ket{k}$ therefore represent positions on a one-dimensional grid with half-open interval $\sqrt{2\pi}[-\sqrt{N}/2, \sqrt{N}/2)$, and the binary representation of $k$ provides a little-endian description of position with respect to the negative bound of the domain. Using a Centered Fourier Transform $\hat{F}_c$ \cite{somma2015quantum}, defined as $\hat{F}_c = \hat{X}_0 \hat{F} \hat{X}_0$ with $\hat{X}_0$ being the Pauli X gate on the 0th qubit and $\hat{F}$ being the Quantum Fourier Transform \cite{coppersmith2002approximate}, the momentum operator in this space is defined as $\hat{p}_d = \hat{F}_c \hat{x}_d \hat{F}_c^\dagger$ and has corresponding momentum eigenstates $\ket{\rho_k} = \hat{F}\ket{k}$ with eigenvalues identical to the position eigenvalues.

Now that we have outlined how to represent the function of interest, we can focus on understanding the Hamiltonian $\hat{H}_{\eta \lambda}$. First, let us focus on what is called the kinetic term  $\hat{K}=\tfrac{1}{2}\sum_{j=1}^N\hat{p}_j^2:=\tfrac{1}{2}\bm{\hat{p}}^2$ \cite{verdon2019quantum}. For analogue devices, this represents the momentum, or the canonically conjugate variable to the position where we have the commutation relation $[\hat{x}_j,\hat{p}_k] = i\delta_{jk}$. For the binary representation of these operators, we can simply use the fact that momentum is analogous to position in Fourier space. Using a Centered Fourier Transform $\hat{F}_c$ \cite{somma2015quantum}, the momentum operator in this space is thus defined as $\hat{p}_d = \hat{F}_c^\dagger \hat{x}_d \hat{F}_c$.

Next, let us examine the action of each layer in the random Trotterization of the Hamiltonian. In the Heisenberg picture, the kinetic term of the Hamiltonian generates a change in position of the form

\begin{equation}\label{eq:parshift}
    e^{i\eta\bm{\hat{p}}^2/2}   \, \bm{\hat{x}} \,
    e^{-i \eta\bm{\hat{p}}^2/2} =\bm{\hat{x}} + \eta \bm{\hat{p}}
\end{equation}

where $\eta$ is a choice of parameter analogous to inverse mass, $\eta \cong m^{-1}$. With this, we can see that the position gets updated by the momentum divided by mass (i.e. the velocity). Now, for the next step, the momentum operator is translated through evolution under the target Hamiltonian as

\begin{equation}\label{eq:momshift}
    e^{i\lambda f(\bm{\hat{x}})}    \bm{\hat{p}}e^{-i\lambda f(\bm{\hat{x}})} = \bm{\hat{p}}- \lambda\, \nabla \! f(\bm{\hat{x}}) 
\end{equation}

thus the momentum is shifted in a manner proportional to the negative gradient of the target function. The details of these derivations are presented in Appendix A. Evolving under both the target and kinetic Hamiltonians yields

\spliteq{\label{eq:double_update}
       \bm{\hat{x}}&\rightarrow \bm{\hat{x}} +\eta\bm{\hat{p}}- \eta\lambda\, \nabla \! f(\bm{\hat{x}})
}
This is analogous to gradient descent with momentum, or classical kinematics, with $\eta = \Delta t/m$ and $\lambda =\Delta t$ for time step size $\Delta t$. For infinitesimal time steps, alternating between Equation \ref{eq:parshift} and Equation \ref{eq:momshift} two steps is equivalent to the quantum dynamics of a particle undergoing motion in a high-dimensional potential, i.e. kinematic evolution.

As an illustrative example of the dynamics of the Hamiltonian simulation done by the QD-HMC algorithm, see Figure \ref{fig:wf}. For a log probability, or target function, that is a 2D Gaussian with mean at $(0,0)$ we show the probability of the wavefunction over 10 trotter steps. The wavefunction is initialized to a location that is away from the mean and we can see the initial wavefunction is 100\% in the location because every step of QD-HMC is initialized to the single state $|x\rangle$. As the evolution progresses, we see the final result is a wavefunction localized at the desired mean $(0,0)$. This is not steps of HMC, but Trotter steps within a single run of our quantum proposal function. Although this evolution is heavily dependent on the number of trotter steps and the time simulated by these steps ($t = 2$ in this example), this provides intuition and pedagogical insight into how the algorithm proposal step works.

Let us highlight the connection between the above dynamics and the symplectic integration more explicitly. Taking Equation \ref{eq:leap} and looking only at the position update, we can see the leapfrog integrator's x update is
\begin{equation}\label{eq:rewrite}
    x \rightarrow x + \epsilon M^{-1} \left ( p - \frac{\epsilon}{2} \nabla \log \pi(x) \right )
\end{equation}
With $\pi(x) = e^{f(x)}$ we can see this reduces to 
\begin{equation}\label{eq:sym_comp}
    x \rightarrow x + \epsilon M^{-1} p - \frac{\epsilon^2}{2} M^{-1} \nabla f(x) 
\end{equation}
Since the inverse of the diagonal mass matrix is the diagonal matrix of inverses, $M^{-1}_{i,i} = (M_{i,i})^{-1}$, and we have established $\eta \approx \frac{\Delta t}{m}$ then we can see these terms are equivalent with $\epsilon = \Delta t$. With $\eta = \frac{\epsilon}{m}$, we can set $\lambda =\frac{\Delta t}{2} = \frac{\epsilon}{2}$. We can now see clearly the similarities of the hamiltonian dynamics between symplectic Leapfrog integration in Equation \ref{eq:sym_comp} and the Quantum Hamiltonian Dynamics in Equation \ref{eq:double_update}. We see that we recover the symplectic integrator behavior, except this is done quantum coherently, allowing for exponentially complex superpositions over position values to be simultaneously symplectically integrated.

Now that we have illustrated the mechanisms of Hamiltonian simulation, we will outline how QD-HMC functions as a Markov Chain Monte Carlo (MCMC) algorithm. First, we show that our QD-HMC algorithm meets the requirement of conserved energy. In the classical case this can be easily shown via $\dv{H}{t} = \sum_i \left [\pdv{H}{x_i} \dv{x_i}{t} + \pdv{H}{p_i} \dv{p_i}{t} \right ] = \sum_i \left [\pdv{H}{x_i} \pdv{H}{p_i} - \pdv{H}{p_i} \pdv{H}{x_i} \right ] = 0 $ \cite{neal2011mcmc}. In the quantum case, we can use inspiration from the Ehrenfest theorem to show our quantum Hamiltonian conserves energy. As was shown in Appendix A, $\dv{\hat{x}}{t} = \eta \hat{p} = \pdv{\hat{H}}{\hat{p}}$ and $\dv{\hat{p}}{t} = - \lambda \nabla f(\hat{x}) = - \pdv{\hat{H}}{\hat{x}}$. Thus, following these derivations  $\dv{\hat{H}}{t} = \sum_i \left [\pdv{\hat{H}}{x_i} \dv{x_i}{t} + \pdv{\hat{H}}{p_i} \dv{p_i}{t} \right ] = 0 $ since this becomes identical to the classical case. Using these same derivatives, it is straightforward to show the quantum Hamiltonian dynamics preserve volume since the divergence of the vector field $\nabla \cdot \mathbf{F}$, $\mathbf{F} = \left ( \hat{x}, \hat{p} \right ) \rightarrow \left ( \pdv{\hat{H}}{\hat{p}}, - \pdv{\hat{H}}{\hat{x}} \right )$ is 0 (this is the same approach as in \cite{neal2011mcmc}). 

Next, a key requirement of MCMC algorithms is reversibility, i.e. it must satisfy the detailed balance condition. This condition states that for a stationary distribution $\pi$ and transition probability $p_{i, j} = P(x_{t + 1} = j | x_t = i)$
\begin{equation}
    \pi_i p_{i,j} = \pi_j p_{j, i} \; \; \;  \forall i, j\ .
\end{equation}
We show that QD-HMC obeys this condition by proving the symmetry of the proposal function $| \langle y | \hat{U}_{\eta \lambda} | x \rangle |^2 = | \langle x | \hat{U}_{\eta \lambda} | y \rangle |^2$ for our algorithm. The proof is presented in Appendix B. Since our proposal is symmetric, we can augment it with a classical acceptance/rejection step which meets the detailed balance condition. In HMC, one needs to add a sign flip to the momentum term to ensure reversibility. Intuitively, the Hamiltonian dynamics only go ``forward'', so $q(x|y) \rightarrow 0$, which means proposals would never be accepted. Adding a sign flip on the momentum, $(x, p) \rightarrow (x, -p)$ enable this reversibility (but is of no practical importance since the momentum is resampled), see \cite[sec 5.4]{betancourt2017conceptual} for more information. Analogous to this, there is a momentum flip necessary at the end of our quantum proposal to ensure reversibility, but as with HMC is of no practical relevance due to the random trotterization. Having established the theory of QD-HMC, we now present some example experiments to outline how the algorithm functions. 

\begin{figure}
    \centering
    \includegraphics[width=0.95\linewidth]{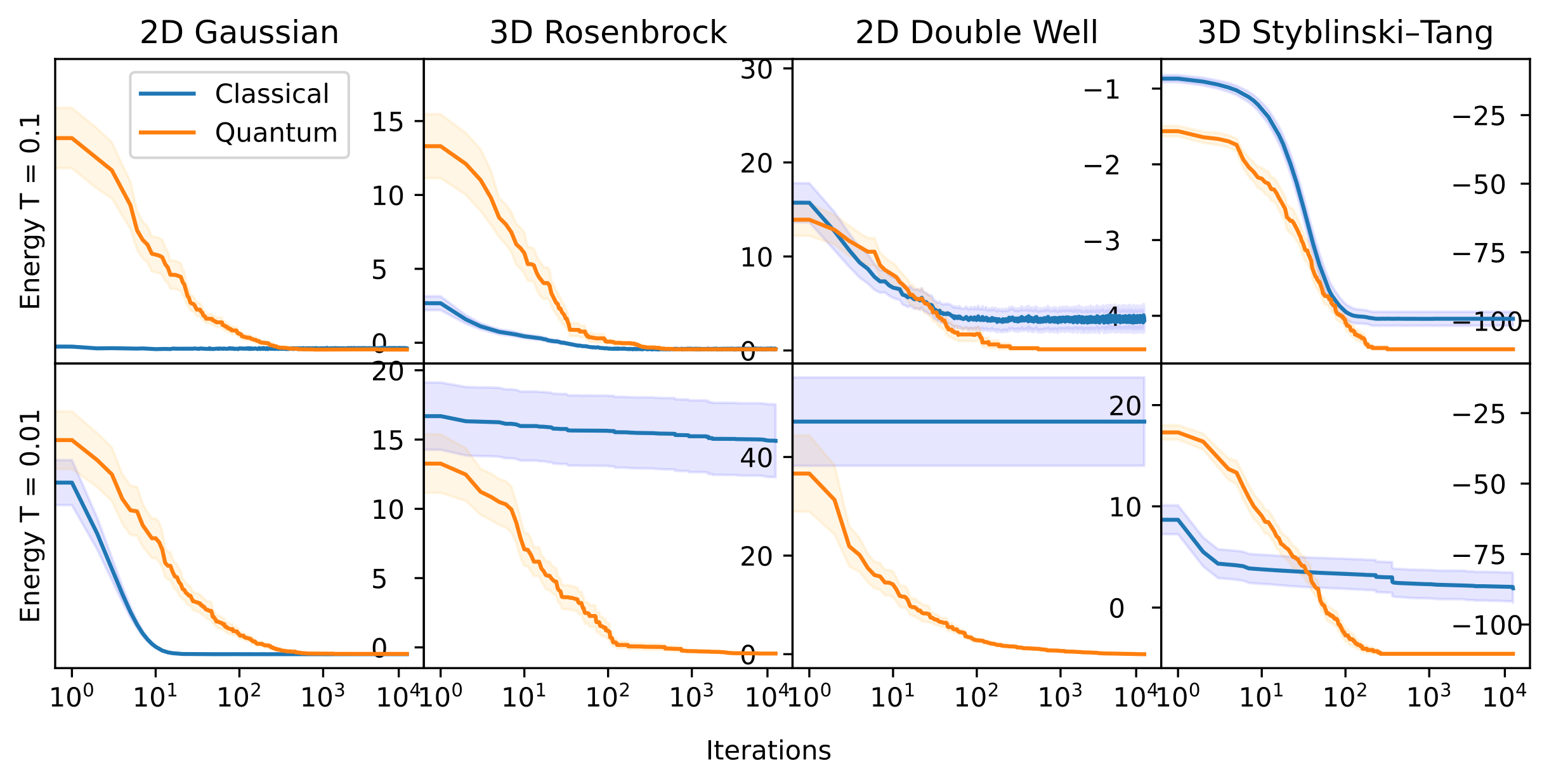}
    \caption{Comparison of QD-HMC and HMC samples energies over 10,000 iterations on a variety of optimization problems for temperature 0.1 and 0.01}
    \label{fig:comp}
\end{figure} 

\section{Simulations \& Experiments}

We evaluate our algorithm on a number test functions \cite{yang2010test} as is common in optimization literature \cite{beiranvand2017best, more2009benchmarking}. The QD-HMC and classical comparisons are generated from implementations built on TensorFlow Probability \cite{dillon2017tensorflow}, TensorFlow Quantum \cite{tfq} and Continuous Variable TensorFlow Quantum \footnote{\href{https://github.com/QuantumVerd/cv-tfq}{github.com/QuantumVerd/cv-tfq}}. All code is available at \footnote{\href{https://github.com/diracq/qdhmc}{github.com/diracq/qdhmc}}. Note that all results come from a noiseless and exact statevector simulations. The results of these experiments can be seen in Figure \ref{fig:comp}, which compares the energy over $10,000$ proposals on a set of functions for low temperatures $T = 0.1$ and $0.01$. We add this temperature so the adjusted log prob becomes $\log p(x) / T$ (i.e. the same way temperature is included in the Boltzmann distribution). The x axis is scaled logarithmically. For each function, we used Optuna \cite{akiba2019optuna} to optimize the classical and quantum hyperparameters independently. It is important to emphasize that these empirical results are not meant to suggest that QD-HMC is universally better than HMC, or that QD-HMC should be used (with simulations) as a replacement. These results are purely a proof of concept and of intuitive, explanatory, and pedagogical interest. There are many HMC improvements that could likely do even better on these problems \cite{hoffman2014no,chen2014stochastic}. The log probability for each function are defined as followed, Gaussian: $\sum_i -x_i-x_i^2$, Rosenbrock: $-\sum_i 10 (x_{i+1} - x_{i})^2 + (1 - x_i)^2$, Double Well: $-(x_0^4 - 4 x_0^2 + x_1^2) - 0.5x_0$, and Styblinski-Tang: $-\frac{1}{2} \sum_i x_i^4 - 16x_i^2 + 5x_i$. These show the minimization of the negative log probability (energy) function (labelled at the top) for low temperatures, making this energy a good proxy for free energy. Although the results are comparable at higher temperatures with HMC performing very well, the low temperature results highlight the potential of quantum dynamical simulation. The initial points for optimization are selected randomly across the range of possible digital quantum representations (which is often far from the minimum, something, as a qualitative point, HMC sometimes struggles with overcoming). These results are averaged over 50 repetitions with different initial points.

\begin{figure}
    \centering
    \includegraphics[width=0.95\linewidth]{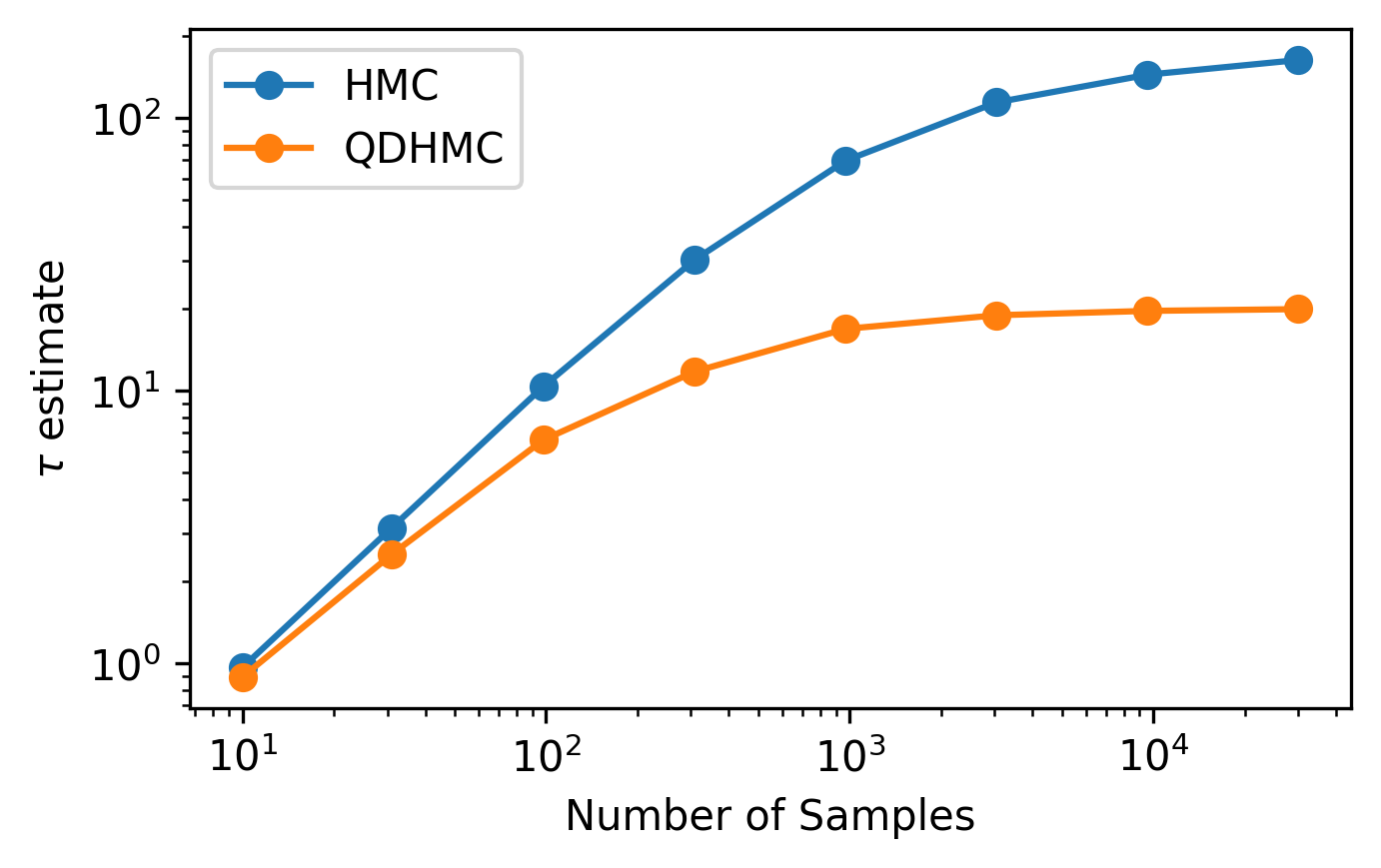}
    \caption{Estimation of autocorrelation time $\tau$ as a function of number of samples for a double well}
    \label{fig:tau}
\end{figure} 

To further understand the advantages QDHMC might yield, we also estimate the autocorrelation time across these problems for QDHMC and HMC. The integrated autocorrelation time $\tau$ is estimated for an finite chain observable $\{f_i\}_{i=1}^{N}$ via \cite{sokal1997monte, foreman2013emcee}

\spliteq{\label{eq:ac}
    \tau &= \frac{1}{2} + \sum_{t=1}^{\infty} \rho_f (t) \\
    \rho_f (t) &= \frac{N \sum_{i=1}^{N - t} (f_i - \mu_f) (f_{i+t} - \mu_f)}{(N - t)\sum_{i=1}^N (f_i - \mu_f)^2}
}

We plot an example of the estimation of autocorrelation time as a function of samples in Figure \ref{fig:tau}. The plots for the other problems are available in Appendix C. These experiments were all performed (and averaged) over the 2D version of functions and were conducted at $T = 5.0$. As these experiments reveal, QDHMC is able to achieve lower autocorrelation times, which means samples become independent faster and can result in improved sampling convergence rates. Although there are a multitude of methods when it comes to evaluating MCMC samplers, autocorrelation time is a common and important foundation. For example, it forms the backbone of effective sample size (ESS) estimation (ESS is proportional to $\frac{1}{\tau}$). As before, these results are not meant to offer strong and universal claims of quantum advantage but to investigate regimes of interest and mechanisms of potential advantage (such as having a lower autocorrelation time).

\begin{figure}
    \centering
    \includegraphics[width=0.95\linewidth]{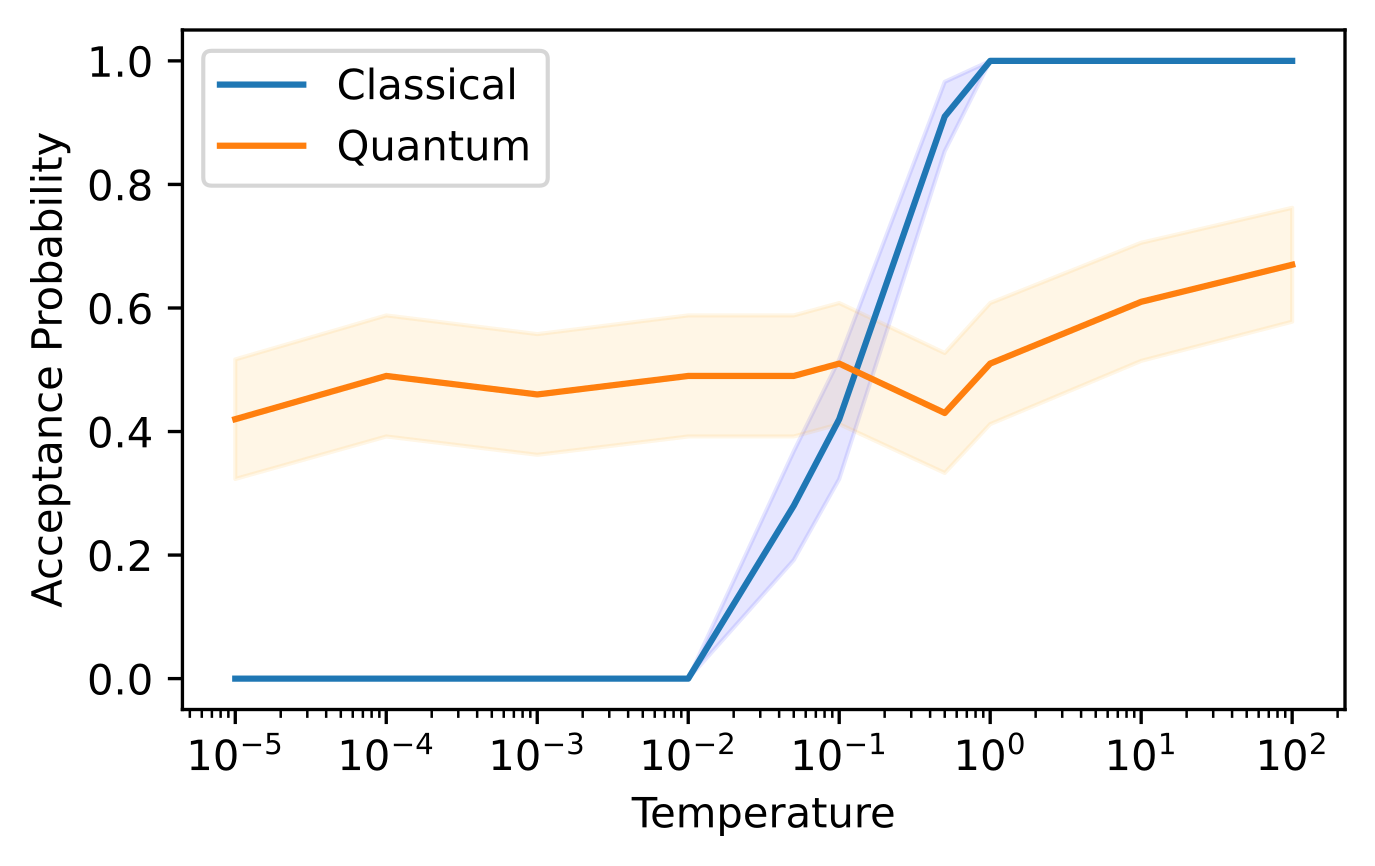}
    \caption{Acceptance probability comparison as a function of temperature for a 2D double well}
    \label{fig:accept}
\end{figure}

To explore QD-HMC further we focus on low temperature regimes. We can see the resilience of the proposals empirically when varying the temperature and observing the acceptance rates. Figure \ref{fig:accept} shows an example of this, in which we plot the acceptance probability of the quantum and classical proposals for 100 different random points for a 2D double well function. The hyperparameters used are the same as the $T = 0.1$ double well above. As can be seen, the quantum proposals average around 50\% independent of temperature, whereas the classical proposals are strongly positively correlated with temperature. The classical acceptance rates could likely be increased with hyperparameter optimization for each temperature, however, this often results in infeasible step sizes at very low temperatures. This result demonstrates that the areas in which it might be best to probe for the advantage of QD-HMC are likely at low temperatures and this highlights the potential benefit of lessened dependence on properly tuned hyperparameters.

\section{Discussion \& Conclusion}

First, let us remark on the important distinctions between QDHMC and the Quantum Enhanced MCMC (QEMCMC) of \cite{layden2022quantum}. While we take inspiration and build upon their work, QDHMC operates on a different class of problems (general continuous optimization vs. discrete ising models) and which we show is analogous to performing HMC on a quantum computer. This allows for implementation on both continuous variable and discrete quantum computers. Additionally, continuous dynamics adds difficulty in analytically computing advantage bounds. Unlike in the discrete case, we cannot materialize the transition matrix for exact analysis. \citeauthor{orfi2024bounding} \cite{orfi2024bounding} demonstrated by analyzing the spectral gap of the transition matrix that the QEMCMC of \citeauthor{layden2022quantum} \cite{layden2022quantum} offers no speedup in the general worst case. Although this proof of no speedup does not directly apply to QDHMC, it does motivate our experiments to focus on areas in which QDHMC could provide practical advantages and to focus on the mechanisms of potential advantages. The potential for a polynomial speedup in low temperature regimes occurs as the quantum dynamics become a Grover search \cite{grover1996fast} with a narrow Gaussian as the target state \cite{verdon2019quantum}. In contrast to the low temperature evaluations above, as the wavefunction becomes delocalized during the evolution at high temperatures, the proposals find themselves in a highly nonlocal part of the landscape (i.e. traversing larger distances). Under high temperature conditions, we see high-weight updates, compared to simulated annealing or HMC. Consider, as an illustrative example, Figure \ref{fig:traj}. These show the trajectory HMC and QD-HMC algorithms take respectively for a high temperature ($T = 100$) optimization in a simple 1D double well. The QD-HMC updates are substantially larger than the classical due to the aforementioned delocalization. Note that this cannot be trivially solved by increasing the HMC step size or integration steps, as this results in convergence failures on this example. The coherent evolution and delocalization in position space enables tunnelling, since the potential and kinetic strengths are randomly sampled. 

\begin{figure}
    \centering
    \includegraphics[width=0.95\linewidth]{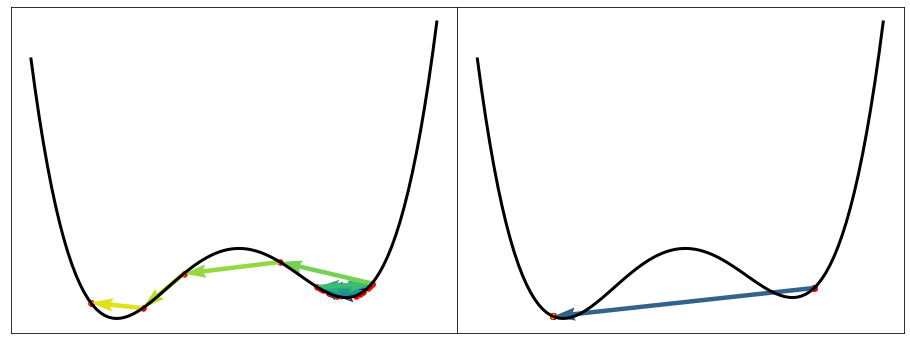}
    \caption{High temperature trajectory comparison between HMC (left) and QD-HMC (right)}
    \label{fig:traj}
\end{figure} 

This work begins to unlock a number of interesting future directions. There are improvements that could be made to this algorithm, e.g. the optimization of hyperparameters that defined the kicking magnitudes, how to tune all hyperparameters in certain optimization or probabilistic inference scenarios, etc. For example, one could tune the temperature synchronously with the temperature hyperparameters \cite{jordan_bang}. Another potential area of exploration would be the potential of population transfer for very low temperature updates \cite{smelyanskiy2020nonergodic}. Experimenting with real quantum hardware is another important future direction. Adapting any algorithm to hardware is a challenge as the simulation of the Hamiltonian induces many two qubit gates (which have substantially higher error rates) presenting an added challenge for this algorithm. However, the CV-QAOA subroutine, which is highly similar in structure to the transition kernel used in this work, was the subject of a recent hardware implementation \cite{enomoto2022continuous}. For digital simulation of continuous dynamics, accessing the higher energy states of qutrits (e.g. through superconducting transmon hardware \cite{cervera2022experimental}) may be helpful for hardware efficient emulation of qudit-based qumodes.

In this paper we demonstrated how to generalize HMC to quantum computing by leveraging coherent quantum dynamics. Building upon work done for quantum enhanced MCMC and CV-QAOA, we presented a new algorithm with promising machine learning use cases. This algorithm, QD-HMC, is theoretically outlined within the quantum mechanical and MCMC frameworks. We provide an empirical and theoretical analysis of our algorithm to demonstrate the potential advantages QD-HMC may offer. Given the prevalence and importance of classical HMC in a variety of applications, improvements on it are meaningful to substantial portions of the Bayesian, machine learning, and optimization communities. QD-HMC expands the potential applications of quantum hardware for all of these areas.

\section{Acknowledgments}

The authors thank Patrick Huembeli and Ian MacCormack for their valuable feedback on early versions of the manuscript.

\bibliography{Bibliography}


\appendix



\section{Appendix A: Heisenberg Picture Update Rules} \label{app:hp}

In the Heisenberg picture the wavefunction defines a time independent basis with time dependent operators, as opposed to the Schr\"odinger picture in which the wavefunction is a function of time and the operators are time independent. The resulting Heisenberg observables, $O^{(H)}$, can be described in relation to the Schr\"odinger observables, $O^{(S)}$, via

\begin{equation}
     O^{(H)}(t) = e^{i\hat{H}t/\hbar} O^{(S)} e^{-i\hat{H}t/\hbar}
\end{equation}

We can see the evolution of our operators at each layer of in the trotterization. The evolution of these operators in the Heisenberg picture can be represented via

\begin{equation}
    \dv{}{t} O^{(H)}(t) = \frac{i}{\hbar} [H, O^{(H)}(t)]
\end{equation}

With this background, we can now derive the updates in the right hand side of Equations \ref{eq:parshift} and \ref{eq:momshift} to see the evolution of the position and momentum operators (for a single timestep). Starting with the momentum operator we have the single (discrete trotter) update, i.e. when $t = 1$.

Since,

\begin{equation}
    \hat{H} = \eta\frac{\hat{\mathbf{p}}^{2}}{2} + \lambda f(\hat{\mathbf{x}})
\end{equation}

we have,

\spliteq{
    \dv{}{t} \mathbf{\hat{x}} &= \frac{i}{\hbar} \biggl [  \eta\frac{\hat{\mathbf{p}}^{2}}{2} + \lambda f(\hat{\mathbf{x}}), \mathbf{\hat{x}} \biggr ] \\
    &= \frac{i}{\hbar} \biggl ( \biggl ( \eta\frac{\hat{\mathbf{p}}^{2}}{2} + \lambda f(\hat{\mathbf{x}}) \biggr ) \hat{\mathbf{x}} - \hat{\mathbf{x}} \biggl ( \eta\frac{\hat{\mathbf{p}}^{2}}{2} + \lambda f(\hat{\mathbf{x}}) \biggr ) \biggr ) \\
    &= \frac{i}{\hbar} \biggl ( \eta\frac{\hat{\mathbf{p}}^{2}}{2} \hat{\mathbf{x}} + \lambda f(\hat{\mathbf{x}}) \hat{\mathbf{x}} - \hat{\mathbf{x}} \eta\frac{\hat{\mathbf{p}}^{2}}{2} -  \hat{\mathbf{x}} \lambda f(\hat{\mathbf{x}}) \biggr ) 
}

Using a temporary wavefunction for clarity, and knowing that $f(\hat{\mathbf{x}})$ and $\hat{\mathbf{x}}$ commute, we can see

\spliteq{
    & \frac{i}{\hbar} \biggl ( \eta\frac{\hat{\mathbf{p}}^{2}}{2} \hat{\mathbf{x}} + \lambda f(\hat{\mathbf{x}}) \hat{\mathbf{x}} - \hat{\mathbf{x}} \eta\frac{\hat{\mathbf{p}}^{2}}{2} -  \hat{\mathbf{x}} \lambda f(\hat{\mathbf{x}}) \biggr ) \psi \\
    &= \frac{i}{\hbar} \biggl ( \eta\frac{\hat{\mathbf{p}}^{2}}{2} \hat{\mathbf{x}} \psi - \hat{\mathbf{x}} \eta\frac{\hat{\mathbf{p}}^{2}}{2} \psi  \biggr ) \\
    &= \frac{i}{2 \hbar} \eta \Bigl [ \hat{\mathbf{p}}^{2} , \hat{\mathbf{x}}  \Bigr ] \\
    &= \frac{i}{2 \hbar} \eta \biggl ( \Bigl [ \hat{\mathbf{p}} , \hat{\mathbf{x}}  \Bigr ]  \hat{\mathbf{p}} +  \hat{\mathbf{p}} \Bigl [  \hat{\mathbf{p}} , \hat{\mathbf{x}} \Bigr ] \biggr ) \\
    &= \frac{i}{2 \hbar} \eta \Bigl ( -i\hbar \hat{\mathbf{p}}  - i\hbar \hat{\mathbf{p}}  \Bigr ) \\
    &= \eta \hat{\mathbf{p}}  
}

Thus we can see that the discrete trotter step enacts a change on $\hat{\mathbf{x}}$ via $ \eta \hat{\mathbf{p}}$, thus recovering the update in Equation \ref{eq:parshift}. 

Now let us walk through the same steps, but for momentum. 

We have,

\spliteq{
    \dv{}{t} \mathbf{\hat{p}} &= \frac{i}{\hbar} \biggl [  \eta\frac{\hat{\mathbf{p}}^{2}}{2} + \lambda f(\hat{\mathbf{x}}), \mathbf{\hat{p}} \biggr ] \\
    &= \frac{i}{\hbar} \biggl ( \biggl ( \eta\frac{\hat{\mathbf{p}}^{2}}{2} + \lambda f(\hat{\mathbf{x}}) \biggr ) \hat{\mathbf{p}} - \hat{\mathbf{p}} \biggl ( \eta\frac{\hat{\mathbf{p}}^{2}}{2} + \lambda f(\hat{\mathbf{x}}) \biggr ) \biggr ) \\
    &= \frac{i}{\hbar} \biggl ( \eta\frac{\hat{\mathbf{p}}^{3}}{2}  + \lambda f(\hat{\mathbf{x}}) \hat{\mathbf{p}} -  \eta\frac{\hat{\mathbf{p}}^{3}}{2} -  \hat{\mathbf{p}} \lambda f(\hat{\mathbf{x}}) \biggr ) 
}

Adding in a wavefunction once again, we can see

\spliteq{
    & \frac{i}{\hbar} \biggl ( \eta\frac{\hat{\mathbf{p}}^{3}}{2}  + \lambda f(\hat{\mathbf{x}}) \hat{\mathbf{p}} -  \eta\frac{\hat{\mathbf{p}}^{3}}{2} -  \hat{\mathbf{p}} \lambda f(\hat{\mathbf{x}}) \biggr )  \psi \\
    &= \frac{i}{\hbar} \lambda \Bigl ( f(\hat{\mathbf{x}}) \hat{\mathbf{p}} \psi -  \hat{\mathbf{p}} f(\hat{\mathbf{x}}) \psi \Bigr ) \\
    &= \lambda \Bigl ( f(\hat{\mathbf{x}}) \nabla \psi -  \nabla ( f(\hat{\mathbf{x}}) \psi) \Bigr ) \\
    &= \lambda \Bigl ( f(\hat{\mathbf{x}}) \nabla \psi - f(\hat{\mathbf{x}}) \nabla \psi - \psi \nabla f(\hat{\mathbf{x}}) \Bigr ) \\
    &= -\lambda \nabla f(\hat{\mathbf{x}})
}

Thus we can see that the discrete trotter step enacts a change on $\hat{\mathbf{p}}$ via $ -\lambda \nabla f(\hat{\mathbf{x}}) $, thus recovering Equation \ref{eq:momshift}.

\section{Appendix B: QD-HMC and detailed balance}\label{app:balance}
In this section we will prove that QD-HMC, as outlined in Algorithm \ref{alg:app_algo}, obeys the detailed balance condition, as it is essential to showing that the sampled distribution converges to the true target Boltzmann distribution in the asymptotic limit.

\begin{algorithm}
\caption{QD-HMC Algorithm}\label{alg:app_algo}
\begin{algorithmic}
    \State $1.$ Prepare bitstring $\ket{x}$
    \State $2.$ Where $f(\x)$ is the problem function, define $$\H_{\eta\la} = \eta\frac{\p^{2}}{2} + \la f(\x)$$ with chosen hyperparameters $\eta, \la$.
    \State $3.$ Via random Trotterization, apply $$\U_{\eta\la} \equiv e^{-i \H_{\eta\la} t}$$
    \State $4.$ Optionally, flip the momentum via Algorithm \ref{alg:flip}
    \State $4.$ Sample bitstring measurement $$y \sim \lvert\bra{y}\U_{\eta\la}\ket{x}\rvert^{2}$$
    \State $5.$ Accept with probability $$A(y|x)=\min\{1,e^{-[f(y)-f(x)]}\}$$
\end{algorithmic}
\end{algorithm}

\begin{claim} \label{eq:1}
\beq
    \lvert\bra{y}\U_{\eta\la}\ket{x}\rvert^{2} = \lvert\bra{x}\U_{\eta\la}\ket{y}\rvert^{2}
    \qquad
    \forall x,y,\eta,\la
\eeq
\end{claim}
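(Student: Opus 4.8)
The plan is to exploit the time-reversal symmetry of $\H_{\eta\la}=\eta\p^{2}/2+\la f(\x)$. The starting point is the tautology that for \emph{any} unitary $\bra{y}\U_{\eta\la}\ket{x}=\overline{\bra{x}\U_{\eta\la}^{\dagger}\ket{y}}$, so that $\lvert\bra{y}\U_{\eta\la}\ket{x}\rvert^{2}=\lvert\bra{x}\U_{\eta\la}^{\dagger}\ket{y}\rvert^{2}$ holds with no hypotheses; it therefore suffices to show that replacing $\U_{\eta\la}$ by $\U_{\eta\la}^{\dagger}$ changes each computational-basis matrix element only by a phase, i.e.\ $\lvert\bra{x}\U_{\eta\la}^{\dagger}\ket{y}\rvert=\lvert\bra{x}\U_{\eta\la}\ket{y}\rvert$ for all $x,y$ (and all $\eta,\la$, which only rescale $\H_{\eta\la}$).

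The key observation to establish is that $\H_{\eta\la}$ is a \emph{real} matrix in the computational (position) basis. The potential term $\la f(\x)$ is real and diagonal there by construction (Eq.~\eqref{eq:xd} makes $\x$ diagonal, and $f:\bbR^{N}\to\bbR$), and the kinetic term $\eta\p^{2}/2$ is real and symmetric because $\p^{2}$ is a discretized Laplacian: diagonalizing $\p$ by the centered Fourier transform gives $\p^{2}=\F_{c}\,\x^{2}\,\F_{c}^{\dagger}$ with real spectrum $\{p_{k}^{2}\}$, and — since the $p_{k}^{2}$ are invariant under negating the momentum index — conjugating back by $\F_{c}$ yields a real symmetric matrix. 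Equivalently, the complex-conjugation antiunitary $\Theta$ defined in the position basis fixes every state $\ket{x}$, acts as $\Theta\,\x\,\Theta^{-1}=\x$ and $\Theta\,\p\,\Theta^{-1}=-\p$, and hence fixes $\H_{\eta\la}$. Granting this, $\Theta\,\U_{\eta\la}\,\Theta^{-1}=\Theta\,e^{-i\H_{\eta\la}t}\,\Theta^{-1}=e^{+i\H_{\eta\la}t}=\U_{\eta\la}^{\dagger}$ (the antiunitary flips the $i$), and evaluating on position eigenstates with $\Theta\ket{x}=\ket{x}$, $\Theta\ket{y}=\ket{y}$ gives $\bra{x}\U_{\eta\la}^{\dagger}\ket{y}=\overline{\bra{x}\U_{\eta\la}\ket{y}}$; the two moduli coincide and the claim follows. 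In matrix language: reality plus Hermiticity of $\H_{\eta\la}$ forces $\U_{\eta\la}^{T}=\U_{\eta\la}$, which even yields the stronger equality $\bra{y}\U_{\eta\la}\ket{x}=\bra{x}\U_{\eta\la}\ket{y}$.

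I expect the genuine obstacle to be verifying that $\p^{2}$ is \emph{exactly} real and symmetric in the chosen digital encoding. The centered Fourier transform is introduced precisely so that the position and momentum eigenvalue grids align, but the half-open domain $\sqrt{2\pi}[-\sqrt{N}/2,\sqrt{N}/2)$ leaves one unpaired momentum value at the boundary, so a short argument (or the explicit momentum-flip step, Algorithm~\ref{alg:flip}, which is the discrete analogue of HMC's momentum inversion) is needed to confirm that the symmetry of $\{p_{k}^{2}\}$ under negating the index, equivalently $\Theta\,\p\,\Theta^{-1}=-\p$, survives discretization; in the analogue/continuum setting it is immediate. Finally, when $\U_{\eta\la}$ is realized by random Trotterization each layer $e^{-i\la_{j}f(\x)\tau_{j}}$ and $e^{-i\eta_{j}\p^{2}\tau_{j}/2}$ is individually real and symmetric in the position basis, and symmetry of the implemented transition probabilities then follows from the reversal symmetry of the random layer ordering, so the detailed-balance conclusion carries over to the actual channel.
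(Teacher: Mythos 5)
Your proof is correct and takes essentially the same route as the paper: the paper shows $\h{U}_{\eta\la}^{T}=\h{U}_{\eta\la}$ by reducing to $(\p^{2})^{T}=\p^{2}$ via $\p^{T}=-\p$, which for a Hermitian $\p$ is exactly your statement that complex conjugation in the position basis sends $\p\mapsto-\p$ and fixes $\H_{\eta\la}$. Your added remarks — that the unpaired boundary momentum only threatens $\p^{T}=-\p$ and not $(\p^{2})^{T}=\p^{2}$ (since $p_{0}^{2}=p_{0}^{2}$ trivially), and that for the randomized Trotter product one invokes reversal symmetry of the layer ordering because $(L_{k}\cdots L_{1})^{T}=L_{1}\cdots L_{k}$ — are correct and in fact tighten two points the paper glosses over.
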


\begin{proof}
\begin{note}
    $(e^{\A})\T = e^{\A\T}$
\end{note}

(B1) holds iff $\U_{\eta\la} = \h{U}_{\eta\la}\T$

This equality is equivalent to:

\spliteq{
    & \Longleftrightarrow 
    \left[ \eta \p^{2} + \la f(\h{x}) \right]\T = \eta \p^{2} + \la f(\x)
    \\ 
    &\Longleftrightarrow 
    \left(\p^{2}\right)\T = \p^{2}
}

\begin{note}
    $\p=\F\x\F^{\dag}$ and $\F\T=\F$.
\end{note}
Therefore,
\spliteq{
    \p\T &= \left( \F^{\dag} \right)\T\x\F\T = \F^{\dag}\x\F
    \\ &= \left(\F^{\dag}\right)^{2} \p\F^{2} = -\p
    }

We can make this transition invertible by adding a Momentum Flip as seen in Algorithm \ref{alg:flip}. \\ 

\begin{algorithm}
\caption{Momentum Flip}\label{alg:flip}
\begin{algorithmic}
    \State $1.$ Momentum flip: 
    $$\M \equiv \F_{c}^{\dag}\X_{j=n}\F_{c}, \quad \M^{\dag} = \M$$
    \State $2.$ Convert to momentum space
    \State $3.$ Flip most significant bit
    \State $4.$ Convert back
\end{algorithmic}
\end{algorithm}

\twocolumngrid 

\beq
    \U_{\eta\la} \mapsto \M\U_{\eta\la} \mid \M\U_{\eta\la}\M = \U_{\eta\la}\T
\eeq

Since,

\spliteq{
    \M\M\T &= \F_{c}^{\dag}\X_{n}\F_{c}\F_{c}\X_{n}\F_{c}^{\dag} \\
    &= -\F_{c}^{\dag}\F_{c}\F_{c}\F_{c}^{\dag} \\
    &= -I
}

It follows that,

\spliteq{
    \left( \M\U \right)\T &= \U\T\M\T
    \\ &= \M\U\M\M\T
    \\ &= \M\U
}
Thus,
\beq
    \lvert\bra{y}\U_{\eta\la}\ket{x}\rvert^{2} = \lvert\bra{x}\U_{\eta\la}\ket{y}\rvert^{2}
    \qquad
    \forall x,y,\eta,\la
\eeq
\end{proof}

\section{Appendix C: Further Autocorrelation Estimates}\label{app:ac}

Figure \ref{fig:tau_g} shows the estimates of $\tau$ for different 2D problems.

\begin{figure}[H]
    \centering
    \includegraphics[width=0.95\linewidth]{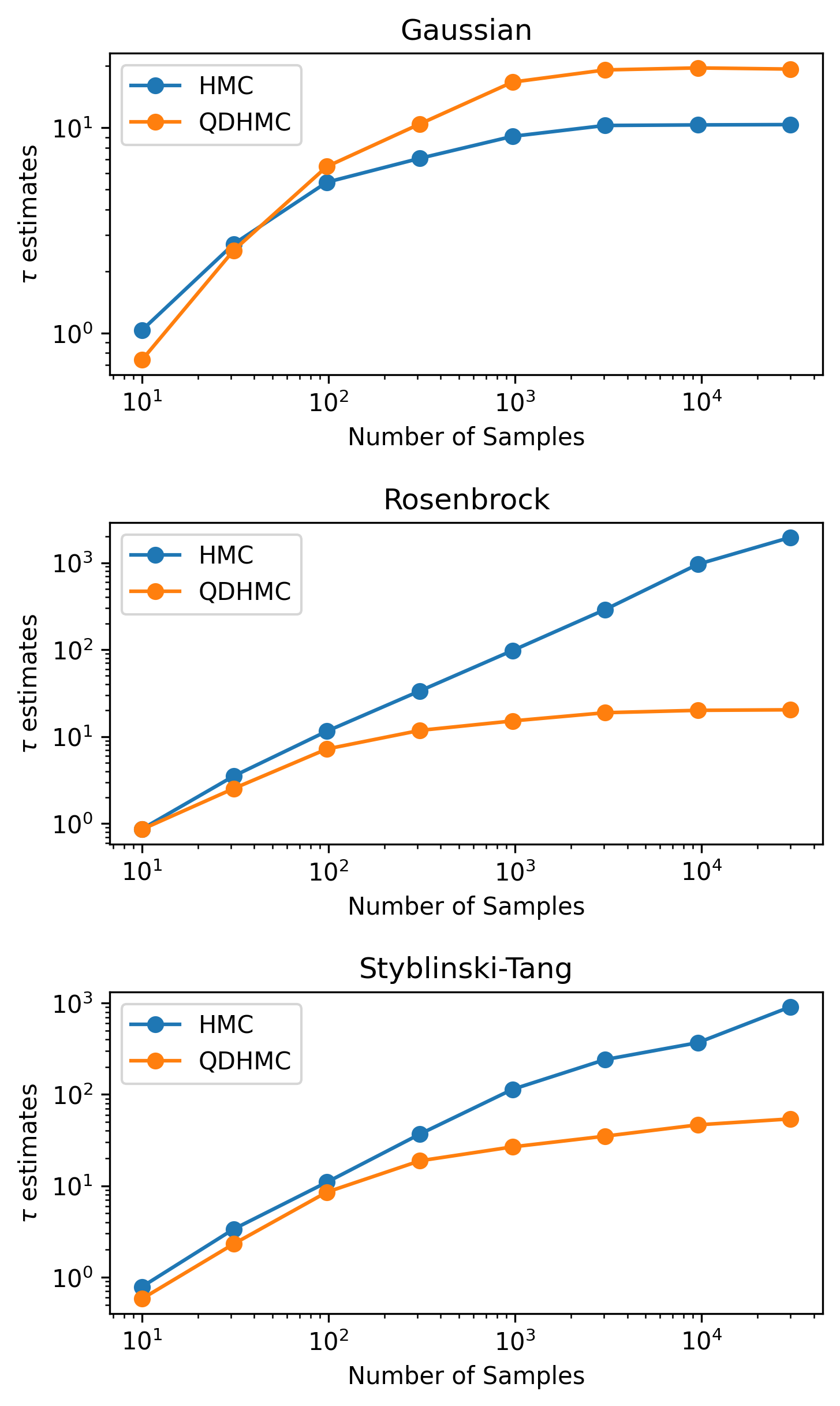}
    \caption{Estimation of autocorrelation time $\tau$ as a function of number of samples}
    \label{fig:tau_g}
\end{figure}

\end{document}